\newtheorem{Proposition}{Proposition}
\title{Quantization Reference Voltage of the Modulated Wideband Converter}
\name{Yaming Wang, Laming Chen and Yuantao Gu$^*$\thanks{This work
was supported in part by the National Natural Science Foundation of
China under Grants NSFC 60872087 and NSFC U0835003. The
corresponding author of this paper is Yuantao Gu
(gyt@tsinghua.edu.cn). }}
\address{State Key Laboratory on Microwave and Digital Communications\\
 Tsinghua National Laboratory for Information Science and Technology\\
 Department of Electronic Engineering, Tsinghua University, Beijing 100084, CHINA}
\begin{document}
\ninept
\maketitle

\begin{abstract}
The Modulated Wideband Converter (MWC) is a recently proposed analog-to-digital converter (ADC) based on
Compressive Sensing (CS) theory. Unlike conventional ADCs, its quantization reference voltage, which
is important to the system performance, does not equal the maximum amplitude of original analog signal. In
this paper, the quantization reference voltage of the MWC is theoretically analyzed and the conclusion demonstrates that the reference voltage is proportional to the square root of $q$, which is a trade-off parameter between sampling rate and number of channels. Further discussions and simulation results show that the reference voltage is proportional to the square root of $Nq$ when the signal consists of $N$ narrowband signals.
\end{abstract}

\begin{keywords}
Reference voltage, Modulated Wideband Converter (MWC), Compressive Sensing (CS).
\end{keywords}

\section{Introduction}
\label{sec:intro}

In telecommunication, radar and other fields of electronic engineering, it is usually of great
concern about multiband signals, which consist of several narrowband signals modulated by different
carrier frequencies. To acquire such a signal, a common practical method \cite{ref1} is to
demodulate the narrowband signals by their carrier frequencies to baseband, respectively. After
low-pass filtered to reject frequencies from other bands, the narrowband signals are converted to
digital samples at rate corresponding to their actual bandwidths. Thus each band is acquired
separately and the total sampling rate is the sum of the bandwidths. This method can reach the
minimal sampling rate while the carrier frequencies must be known a priori.

When the carrier frequencies are unknown, efficient sampling of the multiband signal is a
challenging task because the Nyquist frequency of the signal may exceed the capability of the
up-to-date analog-to-digital converters (ADCs). In this scenario, several sub-Nyquist sampling
strategies have been proposed to handle this challenge. In \cite{ref2} and \cite{ref3}, a scheme
called multicoset sampling is proposed to perform analog-to-digital conversion at low rate, but
knowledge of the frequency support must be known for recovery. Periodic nonuniform sampling is
another sub-Nyquist sampling strategy where several ADCs are applied to form a high sampling rate
for the multiband signal \cite{ref5}.

In recent years, Compressive Sensing (CS) has been raised and developed to recover sparse signals
from far fewer samples \cite{ref6}. Due to the fact that multiband signal is sparse in frequency domain,
several novel analog-to-digital architectures have been designed based on CS theory. In Random
Demodulator \cite{ref7, ref8}, the original signal is multiplied by a pseudorandom sequence,
integrated and sampled at sub-Nyquist rate. In the recovery stage, CS algorithms are performed to
recover the original signal from these samples.

The Modulated Wideband Converter (MWC) \cite{ref9, ref11, ref12} is another CS-based
sub-Nyquist sampling system. It also consists of two stages: sampling, as depicted in
Fig.~\ref{fig1}, and reconstruction. In the sampling stage, modulated sampling is conducted by
mixers and lowpass filters in multiple channels. In the reconstruction stage, sparsity constraint
algorithms are applied to recover the original signal from the multiple observations. The sampling
rate $f_s$ and the frequency $f_p$ of the periodic mixing function $p_i(t)$ satisfies $f_s=qf_p$
with odd $q$, which is a strategy collapsing $q$ channels to a single one. The strategy allows
designers to trade off between sampling rate and the number of hardware devices. Recent researches
focused on this system include a calibration system with simple structure and low computational
complexity to obtain actual measurement matrix \cite{ref13}.

\begin{figure}[t]
\centering
\includegraphics[width=0.4\textwidth]{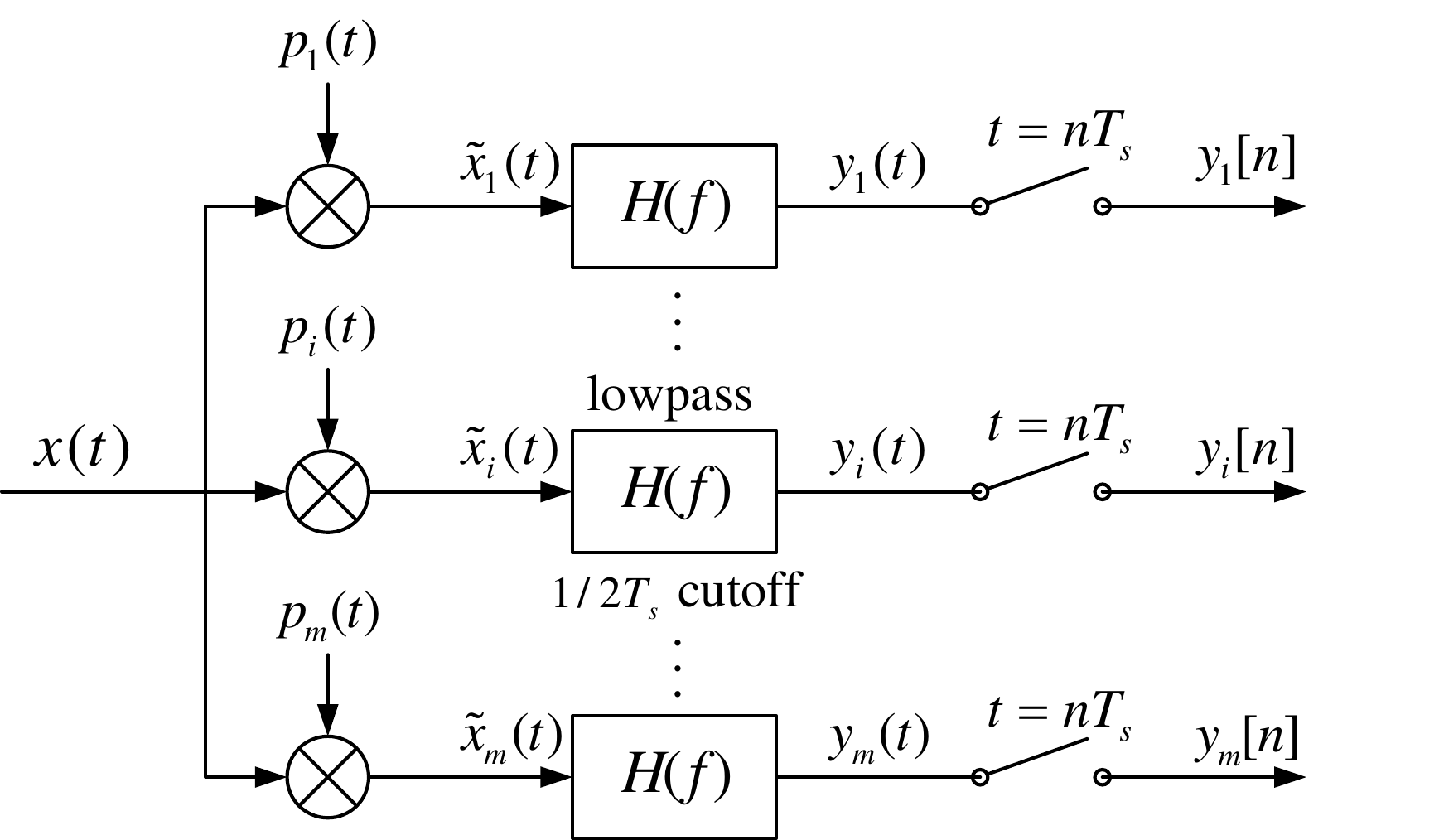}
\caption{A block diagram of the sampling stage of the Modulated Wideband Converter}\label{fig1}
\end{figure}

In order to fully use the accuracy of an ADC, the quantization reference voltage is an essential parameter to consider about. The reference voltage of an conventional ADC simply equals the maximum value of the input analog signal. In the MWC, however, the quantization takes place at the end of the sampling stage and the maximum of the samples is sensitive to several parameters, especially the randomness of the mixing functions. Thus the reference voltage can not be derived from the input signal intuitively. Moreover, simulation results exploit that the maximum of the samples varies widely even when the input signal is fixed. Therefore, an improper reference voltage may have a severe impact on the performance of the whole system. Though important in practice, few researches have been focused on this problem and the reference voltage in practical implementation is obtained by experiments.

This paper aims to theoretically study the reference voltage of quantizer in the MWC. A reasonable method to acquire reference voltage is proposed. By analyzing the input signal under several assumptions, we draw the conclusion that the reference voltage is proportional to the square root of the collapse parameter $q=f_s/f_p$. Further discussions show
that the assumptions made in the analysis are acceptable and that the conclusion is valid for
arbitrary multiband signals.

This paper is organized as follows. Section 2 briefly introduces the structure of the MWC and related mathematical expressions. Section 3 studies the reference voltage and come to the conclusion under rational assumptions. The theoretical results in Section 3 are numerically validated in Section 4.

\section{The Modulated Wideband Converter}
\label{sec:MWC}

As can be seen from Fig.~\ref{fig1}, the input signal $x(t)$ enters $m$ channels simultaneously. In
the $i$-th channel, it is multiplied by a $T_p$-periodic pseudorandom sequence $p_i(t)$, whose
frequency is $f_p=1/T_p$. Specifically, $p_i(t)$ is chosen as a piecewise constant function that
randomly alternates between the levels $\pm 1$ for each of $M$ equal time intervals. The mixed
signal is then truncated by an ideal lowpass filter with cutoff $1/(2T_s)$. Finally, the filtered
signal $y_i(t)$ is sampled at rate $f_s=1/T_s$. According to \cite{ref11}, the parameters $f_s$ and
$f_p$ satisfy $f_s=qf_p$ with odd $q$. Further assume that $x(t)$ is bandlimited to $[-f_{\rm
NYQ}/2,f_{\rm NYQ}/2]$, where $f_{\rm NYQ}$ is much larger than the total sampling rate of the
system.

Formally, the mathematical description of $p_i(t)$ is
\begin{align}
p_i(t)=\alpha_{ik},\quad &k\frac {T_p} M \le t < (k+1) \frac {T_p} M, \nonumber \\
&0\le k \le M-1, \nonumber
\end{align}
with $\alpha_{ik}\in\{+1,-1\}$ and $p_i(t+nT_p)=p_i(t)$ for every $n\in\mathbb Z$. Since $p_i(t)$
is $T_p$-periodic, it has the Fourier expansion
$$
p_i(t)=\sum_{l=-\infty}^{+\infty}c_{il}{\rm e}^{{\rm j}\frac {2\pi} {T_p} lt},
\vspace{-1em}
$$
where $\{c_{il}\}$ are Fourier coefficients. Define $\phi={\rm e}^{-{\rm j}2\pi/M}$, thus
\begin{equation}\label{eq12}
c_{il}=d_l\sum_{k=0}^{M-1}\alpha_{ik}\phi^{lk},
\vspace{-1em}
\end{equation}
where
$$
d_l=\frac 1 {T_p} \int_0^{\frac{T_p} M}{\rm e}^{-{\rm j}\frac {2\pi} {T_p} lt}\,{\rm d}t.
$$

In \cite{ref11}, it has been proved that the Fourier transform of $y_i(t)$ is expanded as
\begin{equation}\label{eq1}
Y_i(f)=\sum_{l=-L_0}^{L_0}c_{il}X(f-lf_p),\quad f\in\left[-f_s/2,f_s/2\right],
\vspace{-1em}
\end{equation}
where
$$
L_0=\left\lceil \frac {f_{\rm NYQ}+f_s} {2f_p} \right \rceil-1.
$$

\section{Reference Voltage of Quantization}
\label{REF}

\subsection{Preliminary}

We first consider about a basic scenario where the signal is only composed of one modulated
narrowband signal, and then discuss about arbitrary multiband signals. In the sequel, the
Fourier transform of $x(t)$, denoted by $X(f)$, consists of two symmetric bands centered at $\pm
l_0f_p$ respectively, where $(l_0-1/2)f_p\ge f_s/2$. The width of both bands satisfies $B\le f_p$. Formally,
\begin{equation}\label{eq2}
X(f)=X_0(f-l_0f_p)+X_0(f+l_0f_p),
\end{equation}
where
$$
X_0(f)=0,\quad f\notin\left[-f_p/2,f_p/2 \right].
$$

Notice that the structure of each channel is identical, and that the
maximum sample of $y_i[n]$ is no larger than the maximum value of
analog signal $y_i(t)$. As a result, the sampling stage can be
simplified to Fig.~\ref{fig2} when reference voltage is being
analyzed. For convenience, in the sequel, the subscript $i$ which denotes the channel number is omitted.

\begin{figure}[t]
\centering
\includegraphics[width=0.35\textwidth]{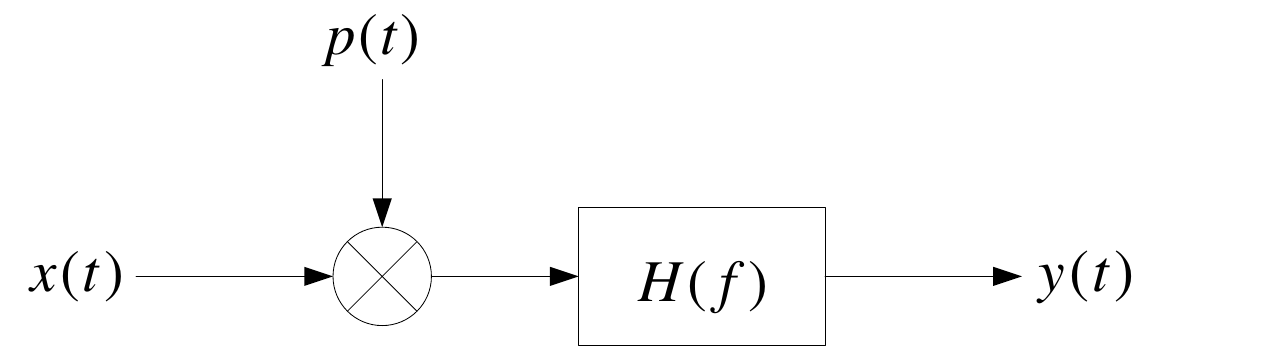}
\caption{The simplified sampling stage}\label{fig2}
\end{figure}

For the basic scenario (\ref{eq2}), the series on the right side of
(\ref{eq1}) have only $2q$ nonzero terms. Then
\begin{align}\label{eq3}
Y(f)=\sum_{l=l_1}^{l_1+q-1} \left[c_l X(f-l f_p)+c_{-l}X(f+l f_p)\right],\\
 \quad f\in\left[-f_s/2, f_s/2\right], \nonumber
\end{align}
where $l_1=l_0-(q-1)/2$. Fig.~\ref{fig3} exploits the simplification
from (\ref{eq1}) to (\ref{eq3}) intuitively.

\begin{figure}[t]
\centering
\includegraphics[width=0.4\textwidth]{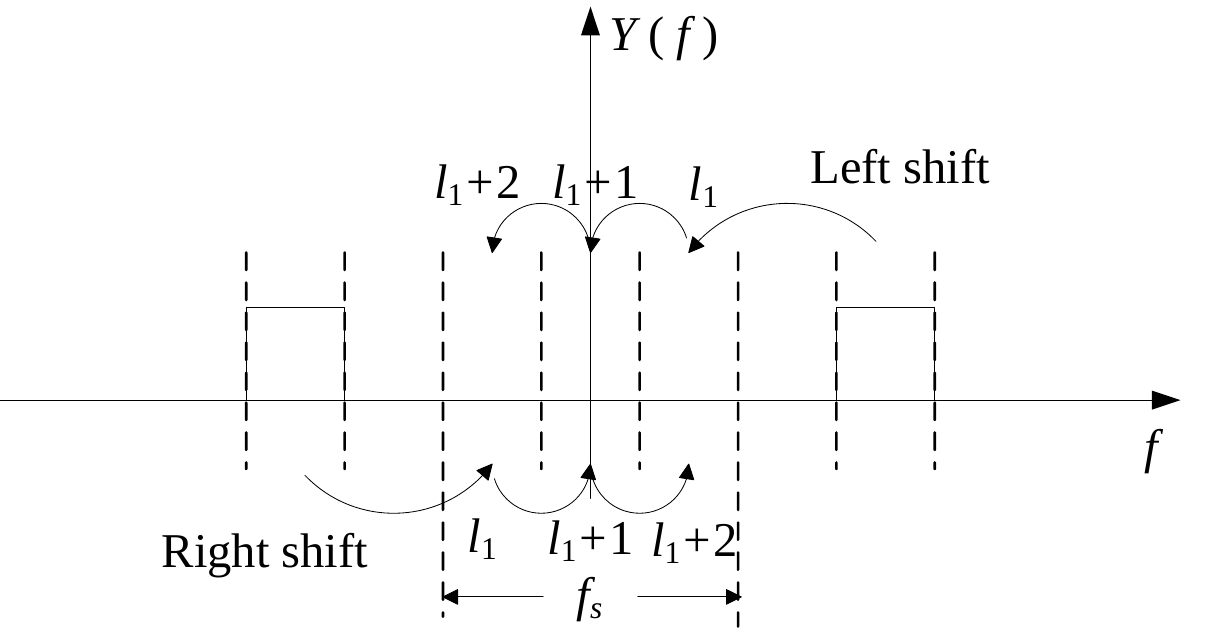}
\caption{The simplification from (\ref{eq1}) to (\ref{eq3})
(Parameters: $q=3$, $l_0=3$, $l_1=2$).}\label{fig3}
\end{figure}

Considering (\ref{eq3}) in time domain, one has
\begin{align}\label{eq4}
y(t)=&\sum_{l=l_1}^{l_1+q-1}c_l\int_{-f_s/2}^{f_s/2}X(f-lf_p){\rm e}^{{\rm j}2\pi
ft}\,{\rm d}t \nonumber\\
&+\sum_{l=l_1}^{l_1+q-1}c_{-l}\int_{-f_s/2}^{f_s/2}X(f+lf_p){\rm e}^{{\rm j}2\pi
ft}\,{\rm d}t.
\end{align}
Since $[-f_s/2,f_s/2]$ contains only one sideband of $X(f-lf_p)$,
Hilbert transform can be applied to represent such single sideband
signal. Setting ${\hat x}(t)=x(t)* \left\{ 1/(\pi t) \right\}$ as Hilbert
transform of $x(t)$, ${\hat X}(f)$ denotes the Fourier transform of
${\hat x}(t)$. According to the definition of Hilbert transform,
$[X(f)-{\rm j}{\hat X}(f)]/2$ represents the negative-frequency band of
$X(f)$. Hence
\begin{align}\label{eq5}
&\sum_{l=l_1}^{l_1+q-1}c_l \int_{-f_s/2}^{f_s/2} X(f-l f_p){\rm e}^{{\rm j}2\pi ft}\,{\rm d}t \nonumber \\
=&\sum_{l=l_1}^{l_1+q-1}c_l {\rm e}^{{\rm j}2\pi lf_pt} \int_{-\infty}^{+\infty} \frac {X(f)-{\rm j}{\hat X}(f)} 2 {\rm e}^{{\rm j}2\pi ft}\,{\rm d}t \nonumber \\
=&\sum_{l=l_1}^{l_1+q-1}c_l {\rm e}^{{\rm j}2\pi lf_pt} \frac {{\tilde x}^*(t)} 2,
\end{align}
where ${\tilde x}(t)=x(t)+{\rm j}{\hat x}(t)$ is the analytic
signal of $x(t)$.

Dealing with the second term on the right side of (\ref{eq4}) in the
same way, it can be derived that
\begin{align}\label{eq6}
\left|y(t)\right|=&\left|\sum_{l=l_1}^{l_1+q-1}\Re\left\{c_l{\rm e}^{{\rm j}2\pi lf_pt}{\tilde x}^*(t)\right\}\right| \nonumber\\
=&\left|{\tilde x}(t)\right| \left|\sum_{l=l_1}^{l_1+q-1} \left|c_l\right|\cos\left(\theta_l+2\pi lf_pt-\theta_{{\tilde x}}(t)\right)\right|,
\end{align}
where $c_l=\left|c_l\right|{\rm e}^{{\rm j}\theta_l}$, ${\tilde
x}(t)=\left|{\tilde x}(t)\right|{\rm e}^{{\rm j}\theta_{\tilde x}(t)}$ and
$\Re(\cdot)$ denotes the real part of its argument. The reference
voltage is determined by the maximum value of $\left|y(t)\right|$.

Define $\left|y\right|_{\max}$ as the maximum value of
$\left|y(t)\right|$. $|y|_{\rm max}$ is a random variable, due to the fact that the mixing function $p(t)$ randomly
alternates between $\pm 1$. Mathematically, (\ref{eq6}) indicates
that the randomness of $|y|_{\rm max}$ results from the
randomness of $\{c_l\}$, which are Fourier coefficients of $p(t)$. The threshold of $\left|y\right|_{\max}$ is
chosen such that the probability
\begin{equation}\label{eq7}
{\rm P}\left\{\left|y\right|_{\rm max}\le \left|y\right|_{\rm
th}\right\}= P_0,
\end{equation}
where $P_0$ is a constant. It is reasonable to use
$\left|y\right|_{\rm th}$ as reference voltage with a proper $P_0$,
say $99\%$.

\subsection{Main Contribution}

The following proposition reveals the property of the reference
voltage defined as (\ref{eq7}).

\begin{Proposition}\label{Pro1}
In the MWC mentioned above, consider fixed input signal (\ref{eq2}).
Further assume\\
\noindent(a) $\left|y(t)\right|$ and $\left|{\tilde
x}(t)\right|$ reach their maximum at the same time;\\
\noindent(b) $\{\theta_l\}$ are i.i.d. random variables with uniform
distribution on $[-\pi,\pi]$, and $\{\left|c_l\right|\}$ are i.i.d.
random variables independent from $\{\theta_l\}$.

Then the reference voltage $\left|y\right|_{\rm th}$ defined by
(\ref{eq7}) is proportional to $\sqrt q=\sqrt{f_s/f_p}$.
Specifically,
\begin{equation}\label{eq8}
\left|y\right|_{\rm th}=\sqrt{q}\left|{\tilde x}(t_0)\right|\sigma Y_{\rm th},
\end{equation}
where $t_0={\rm argmax}_t\left\{{\tilde x}(t)\right\}$,
$\sigma^2={\rm var}\left\{ |c_l|\cos \theta_l\right\}={\rm
var}\left\{ \Re \{c_l\}\right\}$, and $Y_{\rm th}$ satisfies
$$
2\Phi(Y_{\rm th})-1=\frac 1 {\sqrt{2\pi}} \int_{-Y_{\rm th}}^{Y_{\rm th}} {\rm e}^{-t^2/2}\,{\rm d}t=P_0.
$$
where $\Phi(x)$ is the cumulative distribution function of the standard normal distribution.
\end{Proposition}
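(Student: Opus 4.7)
The plan is to use the simplified expression (\ref{eq6}) for $|y(t)|$, combine assumption (a) to pin down the time at which the maximum is achieved, and then invoke the Central Limit Theorem on the resulting sum of i.i.d.\ random variables.

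First I would apply assumption (a): since $|y(t)|$ and $|\tilde x(t)|$ attain their maximum at the same instant $t_0=\mathrm{argmax}_t|\tilde x(t)|$, equation (\ref{eq6}) gives
\begin{equation*}
|y|_{\max}=|\tilde x(t_0)|\left|\sum_{l=l_1}^{l_1+q-1}|c_l|\cos\bigl(\theta_l+2\pi lf_p t_0-\theta_{\tilde x}(t_0)\bigr)\right|.
\end{equation*}
Next, I would define $\phi_l:=\theta_l+2\pi lf_p t_0-\theta_{\tilde x}(t_0)$. Since $t_0$ and $\theta_{\tilde x}(t_0)$ depend only on the deterministic input signal $x(t)$, and since a uniform distribution on $[-\pi,\pi]$ is invariant under any deterministic shift modulo $2\pi$, assumption (b) implies that $\{\phi_l\}$ remain i.i.d.\ uniform on $[-\pi,\pi]$ and independent of $\{|c_l|\}$. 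Hence $|y|_{\max}=|\tilde x(t_0)|\,|S_q|$ with
\begin{equation*}
S_q=\sum_{l=l_1}^{l_1+q-1}|c_l|\cos\phi_l.
\end{equation*}

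The $q$ summands are i.i.d.\ with mean zero (independence of $|c_l|$ and $\phi_l$, together with $\mathbb{E}\cos\phi_l=0$) and common variance $\sigma^2=\mathrm{var}\{|c_l|\cos\theta_l\}$; the identity $\sigma^2=\mathrm{var}\{\Re\{c_l\}\}$ follows from $\Re\{c_l\}=|c_l|\cos\theta_l$. Applying the Central Limit Theorem, $S_q/(\sigma\sqrt q)$ is approximately standard normal for the relevant values of $q$, so
\begin{equation*}
\mathrm{P}\{|y|_{\max}\le|y|_{\mathrm{th}}\}=\mathrm{P}\!\left\{\frac{|S_q|}{\sigma\sqrt q}\le\frac{|y|_{\mathrm{th}}}{\sigma\sqrt q\,|\tilde x(t_0)|}\right\}\approx 2\Phi\!\left(\frac{|y|_{\mathrm{th}}}{\sigma\sqrt q\,|\tilde x(t_0)|}\right)-1.
\end{equation*}
Equating this to $P_0$ and using the definition of $Y_{\mathrm{th}}$ yields (\ref{eq8}) directly.

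The main obstacle is justifying the Gaussian approximation: the CLT is an asymptotic statement, while $q$ here is a moderate odd integer fixed by the hardware design. I would argue that this is acceptable because $|c_l|\cos\phi_l$ has finite variance and a symmetric distribution (so the Berry--Esseen error decays quickly), and I would leave empirical confirmation of the approximation quality to the simulation section. A secondary subtlety is that assumption (a) is only approximate in general, since $|y(t)|$ is the product of $|\tilde x(t)|$ with a random trigonometric envelope; I would comment that for narrowband $X_0$ the envelope varies on the slow time scale $1/f_p$ whereas the carrier-induced modulation is bounded, so the maximizers of the two factors nearly coincide, making assumption (a) a reasonable simplification rather than an exact identity.
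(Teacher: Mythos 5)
Your proposal is correct and follows essentially the same route as the paper's own proof: apply assumption (a) to evaluate $|y(t)|$ at $t_0$, observe that the shifted phases remain i.i.d.\ uniform so the summands $|c_l|\cos\phi_l$ are zero-mean i.i.d.\ with variance $\sigma^2$, and invoke the Central Limit Theorem to identify $Y_{\rm th}$ via the standard normal distribution. Your added remarks on the quality of the Gaussian approximation and the approximate nature of assumption (a) go slightly beyond the paper's proof but do not change the argument.
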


\begin{proof}
According to assumption (a), denoting $\theta_l+2\pi lf_pt_0-\theta_{{\tilde
x}}(t_0)$ by $\hat {\theta}_l$, one has
$$
|y|_{\rm max}=\left|{\tilde x}(t_0)\right|
\left|\sum_{l=l_1}^{l_1+q-1} \left|c_l\right|\cos\left(\hat {\theta}_l\right)\right|.
$$
Notice that $\hat {\theta}_l$ is also uniform distributed on $[-\pi,\pi]$ as $\theta_l$.
Setting $z_l=|c_l|\cos \hat{\theta}_l$, the equivalent problem is to
find $Z_{\rm th}=|y|_{\rm th}/\left|{\tilde x}(t_0)\right|$ such
that
\begin{equation}\label{eq9}
{\rm P}\left\{ \left| \sum_{l=l_1}^{l_1+q-1} z_l \right|\le Z_{\rm th} \right\}= P_0.
\end{equation}

Since ${\rm E}\{\cos \hat{\theta}_l\}$ equals $0$ when $\hat{\theta_l}$ is uniform distributed on $[-\pi,\pi]$, $\{z_l\}$ are zero-mean i.i.d. random variables.
According to Central Limit Theorem \cite{ref14}, the variable
$$Z=\sum_{l=l_1}^{l_1+q-1} \frac{z_l}{\sigma \sqrt q}$$ satisfies
standard normal distribution when $q$ approaches infinity. Hence
\begin{align}\label{eq10}
{\rm P}\left\{ \left| \sum_{l=l_1}^{l_1+q-1} z_l \right|\le Z_{\rm th} \right\}
&={\rm P}\left\{ -\frac {Z_{\rm th}} {\sigma \sqrt{q}} \le Z \le \frac {Z_{\rm th}} {\sigma \sqrt{q}} \right\} \nonumber \\
&=\frac 1 {\sqrt{2\pi}} \int_{-Y_{\rm th}}^{Y_{\rm th}} {\rm e}^{-t^2/2}\,{\rm d}t,
\end{align}
where $Y_{\rm th}=Z_{\rm th}/\left(\sigma\sqrt{q}\right)$. Combining
(\ref{eq9}) and (\ref{eq10}), it is obvious that $Y_{\rm th}$ is a
constant satisfying
$$
\frac 1 {\sqrt{2\pi}} \int_{-Y_{\rm th}}^{Y_{\rm th}} {\rm e}^{-t^2/2}\,{\rm d}t=P_0.
$$
Thus
$$
|y|_{\rm th}=\left|{\tilde x}(t_0)\right|Z_{\rm th}=\sqrt{q}\left|{\tilde x}(t_0)\right|\sigma Y_{\rm th},
$$
and this completes the proof.
\end{proof}

\subsection{Discussion}

Assumption (a) is drawn from the mechanism of the system. Mapping (\ref{eq2}) to time domain, one has
$x(t)=x_0(t)\cos \left(2\pi l_0f_p t\right)$.
Thus
\begin{align}\label{eq11}
\left|{\tilde x}(t)\right|&=\left|x(t)+{\rm j}{\hat x}(t)\right| \nonumber \\
&=\left|x_0(t)\cos \left(2\pi l_0f_p t\right) + {\rm j}x_0(t)\sin \left( 2\pi l_0f_p t \right)\right| \nonumber \\
&=\left|x_0(t)\right|.
\end{align}
Equation (\ref{eq11}) shows that $\left|{\tilde x}(t)\right|$ and $\left|x_0(t)\right|$ reach the maximum at the same time. On the other hand, $y(t)$ is obtained through three steps from
$x_0(t)$: modulated by $\cos(2 \pi l_0f_p t)$ with large $l_0$
typically, multiplied by $p(t)$ and lowpass filtered. In the first step,
modulation with the carrier at high frequency can hardly change the instant when the signal reaches the maximum. The second step does not make any change because $|p(t)|=1$. For the last step, since $f_s\ge f_p$, the lowpass
filter does not affect the ``envelope'' of the signal. Based on the
analysis above, it is convincing that $|y(t)|$ and $|x_0(t)|$ also
reach the maximum at the same time, which leads to assumption (a).

Assumption (b) is based on the observation of $\{c_l,l_l\le l\le
l_1+q-1\}$. Considering (\ref{eq12}), since $l_1$ is far larger than $q$, one has $l_1+q-1\approx l_1$, which results in the approximation that
$\{c_l,l_1\le l\le l_1+q-1\}$ are i.i.d. Furthermore, the
distribution of $\theta_l$ is determined mainly by the sum of $\alpha_k\phi^{lk}, 0\le k\le M-1$. In the complex plane,
$\{\phi^{lk}, 0\le k\le M-1\}$ represents $M$ points located
uniformly on the unit circle and $M$ is typically very large. Therefore
assume that the argument of their linear combination satisfies
uniform distribution is an acceptable way to simplify the analysis.

Now we discuss about arbitrary multiband signals. Consider a multiband signal $y(t)$ which consists of $N$ basic signals denoted by (\ref{eq2}) with the same maximum value. If there is no overlap in time domain among these $N$ signals, the problem can be reduced to study one input signal. When $N$ signals reach their maximum at the same time, say $t_0$, the following analysis intuitively shows that the reference voltage is proportional to $\sqrt{Nq}$. In frequency domain, according to (\ref{eq3}), $[-f_s/2,f_s/2]$ has $q$ shift copy of each band. Thus, observing the multiband signal within a short time interval around $t_0$, the energy of $Y(f)$ is $Nq$ times the energy of each basic signal. Hence the amplitude $|y(t_0)|$ is $\sqrt{Nq}$ times the maximum of each basic signal.

\section{Simulation}
\label{sec:simu}
The following simulation calculates the reference voltage $\left|y\right|_{\rm th}$ numerically with $P_0=99\%$. For specific parameter settings, the experiment is conducted 5000 trials and the value which is larger than $99\%$ of the results is chosen as $\left|y\right|_{\rm th}$.
The number of the channels is $m=30$ and the frequency of the mixing function is $f_p\approx 51.3{\rm MHz}$. The input $x(t)$ is a multiband signal consisting of $N$ pairs of bands, defined as
\begin{equation}\label{eq13}
x(t) = \sum_{i=1}^{N}\sqrt{EB}{\rm sinc}\left(B\left(t-\tau\right)\right)\cos\left(2\pi f_i \left(t-\tau\right)\right)
\end{equation}
where $E=10$, $B=50 {\rm MHz}$, $\tau = 0.4 \rm{\mu s}$, $f_i=\left\{25if_p\right\}$. The reference voltage is calculated under the parameter choice $N=\{1,2,3\}$ and $\{q=2q'+1,\quad q'=0,1,...,9\}$.

The results are plotted in Fig.~\ref{fig4}. It is shown that $\left|y\right|_{\rm th}/ \sqrt{q}$ is almost constant when $q \le 3$. Though (\ref{eq10}) is obtained under the condition that $q$ approaches infinity, the results indicate that $q$ need not be very large. Furthermore, for each fixed $N$ respectively, calculate arithmetic mean of $\left|y\right|_{\rm th}/ \sqrt{q}$. The ratio of the results is $1:1.37:1.63$, which is close to $1:\sqrt{2}:\sqrt{3}$. The results above correspond to the theoretical analysis and the discussions.

\begin{figure}[t]
\centering
\includegraphics[width=0.35\textwidth]{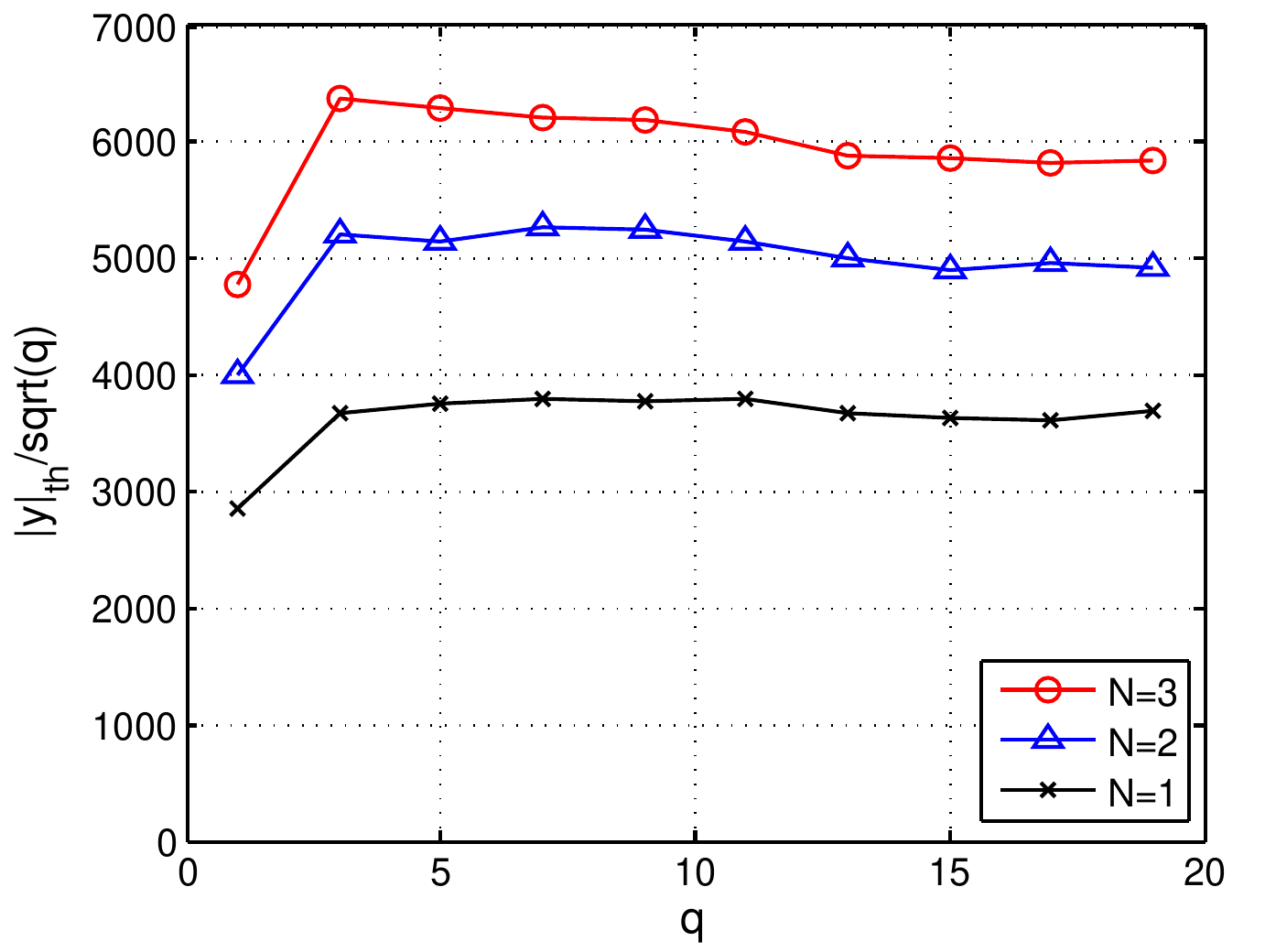}
\caption{Reference voltage with respect to collapse parameter $q$ and the number of signals $N$}\label{fig4}
\end{figure}

\section{Conclusion}
\label{sec:conclu}

In this paper, we propose a method to acquire the reference voltage of the MWC. Then we
theoretically analyze the reference voltage under rational assumptions of the input signal and the mixing functions. The conclusion is drawn that the quantization reference voltage is proportional to the square root of collapse parameter $q=f_s/f_p$. Furthermore, discussions and simulation results show that the conclusion is valid for arbitrary multiband signals. Specifically, when the multiband signal consists of $N$ pairs of bands, the reference voltage is proportional to $\sqrt{Nq}$.


\begin{thebibliography}{1}

\bibitem{ref1}
H.~J.~Landau, ``Necessary density conditions for sampling and interpolation of certain entire
functions,'' \emph{Acta Math}, vol.~117, pp.~37-52, Feb.~1967.

\bibitem{ref2}
C.~Herley and P.~W.~Wong, ``Minimum rate sampling and reconstruction
of signals with arbitrary frequency support,'' \emph{IEEE Trans.
Inf. Theory}, vol.~45, no.~5, pp.~1555-1564, Jul.~1999.

\bibitem{ref3}
R.~Venkataramani and Y.~Bresler, ``Perfect reconstruction formulas and bounds on aliasing error in
sub-Nyquist nonuniform sampling of multiband signals,'' \emph{IEEE Trans. Inf. Theory}, vol.~46,
no.~6, pp.~2173-2183, Sep.~2000.

\bibitem{ref5}
R.~H.~Walden, ``Analog-to-digital converter survey and analysis,'' \emph{IEEE J. Sel. Areas
Commun.}, vol.~17, no.~4, pp.~539-550, Apr.~1999.

\bibitem{ref6}
D.~L.~Donoho, ``Compressed Sensing,'' \emph{IEEE Trans. Inf. Theory}, vol.~52, no.~4, pp.~1289-1306, Apr.~2006.

\bibitem{ref7}
J.~N.~Laska, S.~Kirolos, M.~F.~Duarte, T.~S.~Ragheb, R.~G.~Baraniuk, and Y.~Massoud, ``Theory and
implementation of an analog-to-information converter using random demodulation,'' in \emph{Proc.
IEEE Int. Symp. Circuits Syst.}, pp.~1959-1962, May~2007.

\bibitem{ref8}
J.~A.~Tropp, J.~N.~Laska, M.~F.~Duarte, J.~K.~Romberg, and R.~G.~Baraniuk, ``Beyond Nyquist:
Efficient sampling of sparse bandlimited signals,'' \emph{IEEE Trans. Inf. Theory}, vol.~56, no.~1,
pp.~520-544, Jan.~2010.

\bibitem{ref9}
M.~Mishali and Y.~C.~Eldar, ``Blind multiband signal reconstruction: Compressed sensing for analog
signals,'' \emph{IEEE Trans. Signal Processing}, vol.~57, no.~3, pp.~993-1009, Mar.~2009.

\bibitem{ref11}
M.~Mishali and Y.~C.~Eldar, ``From theory to practice: sub-Nyquist sampling of sparse wideband
analog signals,'' \emph{IEEE Journal of Selected Topics in Signal Processing}, vol.~4, no.~2,
pp.~375-391, Apr.~2010.

\bibitem{ref12}
M.~Mishali, Y.~C.~Eldar, O.~Dounaevsky, and E.~Shoshan, ``Xampling: Analog to digital at
sub-Nyquist rates,'' \emph{IET Circuits, Devices \& Systems}, vol.~5, no.~1, pp.~8-20, Jan.~2011.

\bibitem{ref13}
L.~Chen, J.~Jin and Y.~Gu, ``A calibration system and perturbation analysis for the Modulated
Wideband Converter,'' \emph{Proc. IEEE Int. Conf. on Signal Processing}, pp.~78-81, Oct.~2010.

\bibitem{ref14}
A.~Papoulis and S.~U.~Pillai, \emph{Probability, Random Variables and Stochastic Processes, 4th
Edition}, McGraw-Hill, 2002.

\end{thebibliography}
\end{document}